\theoremstyle{remark}
\def\bm#1{\mbox{\boldmath $#1$}}
\newtheorem{lemma}{Lemma}
\newtheorem{proposition}{Proposition}
\newtheorem{remark}{Remark}
\begin{document}
\title{Robust Design for IRS-Aided Communication Systems with User Location Uncertainty}
\author{\IEEEauthorblockN{Xiaoling Hu, Caijun Zhong, Mohamed-Slim Alouini, and Zhaoyang Zhang}
\thanks{X. Hu, C. Zhong and Z. Zhang are with the College of Information Science and Electronic Engineering, Zhejiang University, Hangzhou, China.(email: caijunzhong@zju.edu.cn).}
\thanks{M.-S. Alouini is with the Department of Computer, Electrical and Mathematical Science and Engineering, King Abdullah University of Science and Technology, Thuwal 23955, Saudi Arabia.}
}
\maketitle

\begin{abstract}
In this paper, we propose a robust design framework for IRS-aided communication systems in the presence of user location uncertainty. By jointly designing the transmit beamforming vector at the BS and phase shifts at the IRS, we aim to minimize the transmit power subject to the worse-case quality of service (QoS) constraint, i.e., ensuring the user rate is above a threshold for all possible user location error realizations.  With unit-modulus, this problem is not convex. The location uncertainty in the QoS constraint further increases the difficulty of solving this problem. By utilizing techniques of Taylor expansion, S-Procedure and semidefinite relaxation (SDP), we transform this problem into a sequence of semidefinite programming (SDP) sub-problems. Simulation results show that the proposed robust algorithm substantially outperforms the non-robust algorithm proposed in the literature, in terms of probability of reaching the required QoS target.
\end{abstract}

\begin{IEEEkeywords}
IRS, robust beamforming, SDP.
\end{IEEEkeywords}

\title{Robust design for an IRS-aided system under user location  uncertainty}

\section{Introduction}
Intelligent reflecting surface (IRS) has recently emerged as a promising method to tackle the coverage problem of dead-zone users \cite{slim,wu2019towards}. The IRS is a meta-surface consisting of a large number of low-cost, passive reflecting elements, each of which can independently reflect the incident signal with adjustable phase shifts. Through proper design of the phase shifts at the IRS \cite{wu2019intelligent,multicast}, the signal received at the user can be significantly enhanced.

To facilitate the design of phase shifts, channel state information (CSI) is compulsory. As such, plenty of works have been devoted to the estimation of IRS-aided channels, including the channel between the BS and IRS, as well as the reflection channel between the IRS and user.
In general, the proposed methods can be categorized into two types. The first type is to estimate the cascaded  channels\cite{zheng2019intelligent,he2019cascaded}. However, with a large number of reflecting elements, the training overhead would become prohibitive. The second type is to directly estimate two channels separately \cite{taha2019enabling}, assuming that the IRS has both reflection mode and receive mode. However, this requires a large number of receive radio frequency (RF) chains (equal to the number of reflecting elements), which would significantly increase the hardware cost as well as the power consumption of the IRS.

To tackle the above issues, one possible way is to only estimate the angles of arrivals (AOAs) and angles of departures (DOAs)\cite{hu1,han2019large,X.Hu1}. Because the locations of the IRS and BS remain fixed, the channel between the BS and IRS varies very slowly and can be accurately estimated by computing the AOAs and AODs \cite{zhou2019robust}. Then, exploiting the user location information provided, for example, by global positioning system (GPS), the angular information of the reflection channel from the IRS to the user can be obtained. However, due to user mobility and the precision of GPS, the user location information may not be accurate, resulting in imperfect angular information.

Motivated by this, this letter proposes a robust design framework in the presence of user location uncertainty.\footnote{  Unlike \cite{zhou2019robust} and \cite{robust1}, which adopt conventional error model, the current work first proposes a location-based channel estimation scheme, and then presents a practical channel error model related to location uncertainty. In addition, the method to tackle the optimization problem is also different from \cite{zhou2019robust} and \cite{robust1}.} Specifically, considering a bounded spherical error model, the transmit beamforming vector and phase shifts are jointly designed to minimize the transmit power, subject to the minimum achievable rate constraint for all possible user location errors. To tackle the non-convex worse-case rate constraint, the second-order Taylor expansion is used to approximate the achievable rate, and the S-Procedure is used to convert the resultant semi-definite constraint into a matrix inequality. Then, applying the semi-definite relaxation (SDR) method, an alternating algorithm is designed, which optimizes the transmit beamforming vector and phase shifts iteratively by solving a sequence of semi-definite programming (SDP) sub-problems.
Simulation results show that the proposed robust algorithm can guarantee the target rate
regardless of the user location errors, which substantially outperforms the non-robust algorithm proposed in \cite{wu2019intelligent}.

\section{System Model}
We consider an IRS-aided system as illustrated in Fig.\ref{f0}, where one BS with $N$ antennas communicates with a single-antenna user, which is assisted by an IRS with $M$ reflecting elements. Like most works on IRS, e.g.,\cite{wu2019intelligent,zheng2019intelligent}, we assume that the IRS operates in the far-field regime. Furthermore, we assume that direct link  between the BS and the user does not exist, due to blockage or unfavorable propagation environments. Both the BS and the IRS are equipped with uniform rectangular arrays (URAs) with the size of $N_y \times N_z$ and $M_y \times M_z$ respectively, where $N_y$ ($N_z$) and $M_y$ ($M_z$) denote the numbers of  BS antennas and reflecting elements along the $y$ ($z$) axis, respectively.

\begin{figure}[t]
\includegraphics[width=3in]{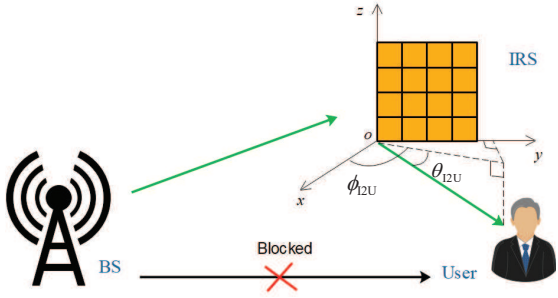}
\centering
\caption{  Model of the IRS-aided communication system. }
\label{f0}
\end{figure}

\subsection{Downlink Transmission}
During the downlink data transmission phase, the BS transmits the signal ${\bf x}=  {\bf w}  s $, where ${\bf w} $ is the beamforming vector  and $s $ is the symbol for the user, satisfying $\mathbb{E}\left\{ |s|^2\right\}=1$.

Then, the signal received  at the   user is given by
\begin{align}
    y = {\bf g}_{\text{I2U} }^T { \bm \Theta } {\bf G}_{\text{B2I} }    {\bf w}  s +n,
\end{align}
where ${\bf g}_{\text{I2U} } \in \mathbb{C}^{M \times 1}$ denotes the channel from the IRS to the user, ${\bf G}_{\text{B2I}} \in \mathbb{C}^{M \times N} $ is the channel between the BS and the IRS, and $n$ is the additive white Gaussian noise (AWGN) which follows the circularly symmetric complex Gaussian distribution with zero mean and variance $\sigma_0^2$.
The phase shift matrix of the IRS is given by  ${\bm \Theta}= {\text {diag}} \left( {\bm \xi}\right) \in \mathbb{C}^{M \times M}$
with the phase shift  beam $ {\bm \xi}={[ e^{j\vartheta_{1}},..., e^{j\vartheta_{n}},...,  e^{j\vartheta_{M}}]}^T \in \mathbb{C}^{M \times 1}$.

\subsection{Channel Model}
{  We consider a narrowband millimeter-wave (mmWave) system, and adopt the narrowband geometric channel model.} As such, the channel from the IRS to the BS can be expressed as
\begin{align} \label{G}
   {\bf G}_{\text{B2I}}\!=\! \sum\limits_{l=1}^{D} \beta_{ l} {\bf b} (\vartheta_{\text{z-B2Ia},l},\vartheta_{\text{y-B2Ia},l} )
{\bf a}^T ( \vartheta_{\text{z-B2I},l},\vartheta_{\text{y-B2I},l} ) ,
\end{align}
where $D$ is the number of paths, $\beta_{ l} $ is the channel coefficient of the $l$-th path,   ${\bf a}$  and ${\bf b}$ are the array response vectors of the BS and IRS respectively.
 The effective angles  of departure (AODs) of the $l$-th path, i.e., the phase differences between two adjacent antennas  along $z$ and $y$ axes,  are given by
$
 \vartheta_{\text{z-B2I},l}=-\frac{2 \pi d_{\text{BS}}}{\lambda} \sin \theta_{\text{B2I},l}$ and
 $\vartheta_{\text{y-B2I},l}=-\frac{2 \pi d_{\text{BS}}}{\lambda} \cos \theta_{\text{B2I},l} \sin \phi_{\text{B2I},l}$, respectively, where $d_{\text{BS}}$ is the  distance between two adjacent BS antennas,
$\lambda$ is the carrier wavelength, $ \theta_{\text{B2I},l}$ and $\phi_{\text{B2I},l}$ are the elevation and azimuth  AODs, respectively. Similarly, the two effective angles of arrival (AOAs) can be written as
 $\vartheta_{\text{z-B2Ia},l}=\frac{2 \pi d_{\text{IRS}}}{\lambda} \cos \theta_{\text{B2Ia},l} \cos \phi_{\text{B2Ia},l}$ and
 $\vartheta_{\text{y-B2Ia}}= \frac{2 \pi d_{\text{IRS}} }{\lambda} \cos \theta_{\text{B2Ia},l} \sin \phi_{\text{B2Ia},l}$, respectively,
 where $d_{\text{IRS}}$ is the  distance between two adjacent reflecting elements,
 $ \theta_{\text{B2Ia},l}$ and $\phi_{\text{B2Ia},l}$ are the elevation and azimuth AOAs, respectively. We further assume that $d_{\text{BS}}=d_{\text{IRS}}=\frac{\lambda}{2}$.

The $s$-th element of $ {\bf a} \left( \vartheta_{\text {z}}, \vartheta_{\text {y}} \right) \in \mathbb{C}^{N \times 1}$ and the $i$-th element of
${\bf b} \left( \vartheta_{\text {z}}, \vartheta_{\text {y}} \right) \in \mathbb{C}^{M \times 1}$ are respectively given by
 \begin{align}
 & \left[ {\bf a}\left( \vartheta_{\text {z}}, \vartheta_{\text {y}} \right)\right]_{s}
 =e^{j\pi\left\{
(s_m-1) \vartheta_{\text {z}}+(s_n-1)  \vartheta_{\text {y}}\right\}},\\
& \left[ {\bf b}\left( \vartheta_{\text {z}}, \vartheta_{\text {y}} \right)\right]_{i}
 =e^{j\pi\left\{
(i_m-1) \vartheta_{\text {z}}+(i_n-1)  \vartheta_{\text{y}}\right\}},\\
& s_m=s-(\lceil  {s}/{N_z} \rceil-1) N_z, \
 s_n=\lceil {s}/{N_z} \rceil,\\
& i_m=i-(\lceil  {i}/{M_z} \rceil-1) M_z, \
 i_n=\lceil {i}/{M_z} \rceil,
 \end{align}
 where $j=\sqrt{-1}$, and we use $[{\bf z}]_i$ to denote the $i$-th entry of a vector ${\bf z}$.

Since the IRS is usually deployed near the user, a LOS channel model is assumed to model the reflection channel between the IRS and the user. Specifically, the channel from the IRS to the user is given by
\begin{align}
   {\bf g}_{\text{I2U} }^T= \alpha_{\text{I2U} } {\bf b}^T \left(\vartheta_{\text{z-I2U} },\vartheta_{\text{y-I2U} } \right),
\end{align}
 where  $\alpha_{\text{I2U} }$ is the channel coefficient,  the two effective AODs from the IRS to the user  $\vartheta_{\text{z-I2U}}$ and
 $\vartheta_{\text{y-I2U}}$ are respectively defined as \cite{array}
 \begin{align}
 & \vartheta_{\text{z-I2U}}\!=\!-\frac{2   d_{\text{IRS}}}{\lambda} \sin \theta_{\text{I2U}}\!=\!- \sin \theta_{\text{I2U}}, \label{v1}\\
 & \vartheta_{\text{y-I2U}}\!=\!-\frac{2  d_{\text{IRS}}}{\lambda} \cos \theta_{\text{I2U}} \sin \phi_{\text{I2U}}\!=-\!  \cos \theta_{\text{I2U}} \sin \phi_{\text{I2U}} ,\label{v2}
 \end{align}
 where $\theta_{\text{I2U}}$ and $\phi_{\text{I2U}}$ are  respectively the elevation and azimuth AODs, as shown in Fig.~\ref{f0}.

\subsection{Performance Measure}
Assuming Gaussian signaling, the achievable rate of the system can be expressed as
\begin{align} \label{R}
R=\log_2\left( 1+ {{\left| {\bf g}_{\text{I2U} }^T { \bm \Theta } {\bf G}_{\text{B2I} } {\bf w} \right|}^2}/{\sigma_0^2 }\right).
\end{align}

\section{Location-Based Channel Estimation}
In the IRS-aided communication system, due to the fixed location of the IRS, the channel from the BS to the IRS, i.e., ${\bf G}_\text{B2I}$, usually remains constant over a long period, and thus can be accurately estimated. Therefore, we assume that ${\bf G}_\text{B2I}$ is perfectly known. In contrast, due to user mobility, the reflection channel ${\bf g}_{\text{I2U}}$ varies over the time, hence should be estimated. As such, we propose to exploit user location information to estimate the reflection channel ${\bf g}_{\text{I2U} }$.

% In general, user location information obtained from the GPS is imperfect. Let  $(\hat x_{\text{U} },\hat  y_{\text{U} },\hat z_{\text{U} })$ denote the  estimated location of the user and ${\bf \Delta}\triangleq [\Delta x_{\text{U} },\Delta y_{\text{U} },\Delta z_{\text{U} }]^T$ denote the estimation error. The effective AODs from the IRS to the user can be calculated as
% \begin{align}
% \hat\vartheta_{\text{y-I2U} } &= (y_{\text{I}}-\hat y_{\text{U} })/{\hat d_{\text{I2U} }},\label{AOD1}\\
% \hat\vartheta_{\text{z-I2U} } &=(z_{\text{I}}-\hat z_{\text{U} })/{\hat d_{\text{I2U} }},\label{AOD2}
% \end{align}
% where  $(x_{\text{I}}, y_{\text{I}}, z_{\text{I}})$ is the location of the IRS, $\hat d_{\text{I2U} }=\sqrt{(x_{\text{I}}-\hat x_{\text{U} })^2+(y_{\text{I}}-\hat y_{\text{U} })^2+(z_{\text{I}}-\hat z_{\text{U} })^2}$ is the  distance between the   IRS and the     user.
The elevation and azimuth AODs ($\theta_{\text{I2U}}$ and $\phi_{\text{I2U}}$) have the following relationship with the locations of the IRS and  user:
\begin{align}
   y_{\text{U}}\!-\! y_{\text{I}}\!= \!d_{\text{I2U}} \cos\theta_{\text{I2U}} \sin \phi_{\text{I2U}},  \
  z_{\text{U}}\!-\! z_{\text{I}}= d_{\text{I2U}} \sin\theta_{\text{I2U}}, \label{v4}
\end{align}
where  $d_{\text{I2U}}$ is the  distance between the IRS and the user, $(x_{\text{I}}, y_{\text{I}}, z_{\text{I}})$ and $(x_{\text{U}}, y_{\text{U}}, z_{\text{U}})$ are locations of the IRS and the user  respectively.
 Substituting  (\ref{v4}) into (\ref{v1}) and (\ref{v2}), we obtain the relationship  between the effective AODs ($\vartheta_{\text{z-I2U}}$ and $\vartheta_{\text{y-I2U}}$) and locations of the IRS and the user:
 \begin{align}
    \vartheta_{\text{y-I2U}}=\left( y_{\text{I}}- y_{\text{U}}\right)/{d_{\text{I2U}}},  \ \
    \vartheta_{\text{z-I2U}}=\left( z_{\text{I}}- z_{\text{U}}\right)/{d_{\text{I2U}}} .
 \end{align}

In general, user location information obtained from the GPS is imperfect. Let  $(\hat x_{\text{U} },\hat  y_{\text{U} },\hat z_{\text{U} })$ denote the  estimated location of the user and ${\bf \Delta}\triangleq [\Delta x_{\text{U} },\Delta y_{\text{U} },\Delta z_{\text{U} }]^T$ denote the estimation error. The estimated effective AODs from the IRS to the user are given by
\begin{align}
\hat\vartheta_{\text{y-I2U} } &= (y_{\text{I}}-\hat y_{\text{U} })/{\hat d_{\text{I2U} }},  \
\hat\vartheta_{\text{z-I2U} } =(z_{\text{I}}-\hat z_{\text{U} })/{\hat d_{\text{I2U} }}\label{AOD2}.
\end{align}

\begin{proposition} \label{p1}
The   effective AODs from the  IRS  to the  user can be approximately decomposed as
\begin{align}
  \vartheta_{\text{z-I2U} } =
   \hat\vartheta_{\text{z-I2U} } +\epsilon_{\text{z-I2U} }, \
  \vartheta_{\text{y-I2U} } =
   \hat\vartheta_{\text{y-I2U} } +\epsilon_{\text{y-I2U} },\label{E5}
\end{align}
where
\begin{align}
& \epsilon_{\text{z-I2U} }
  \!=  \! \frac{ \left( \hat\vartheta^2_{\text{z-I2U} }-1\right) \Delta z_{\text{U}  } \! +  \!\hat\vartheta_{\text{z-I2U} } \hat\vartheta_{\text{y-I2U} } \Delta y_{\text{U}  } \!+ \!\hat\vartheta_{\text{z-I2U} }\hat\vartheta_{\text{x-I2U} } \Delta x_{\text{U}  } }{\hat d_{\text{I2U} }}, \nonumber \\
 & \epsilon_{\text{y-I2U} }
   \!=  \! \frac{ \left( \hat\vartheta^2_{\text{y-I2U} } \!- \!1\right) \Delta y_{\text{U} }  \!+ \! \hat\vartheta_{\text{y-I2U} } \hat\vartheta_{\text{z-I2U} } \Delta z_{\text{U} } \!+ \!\hat\vartheta_{\text{y-I2U} }\hat\vartheta_{\text{x-I2U} } \Delta x_{\text{U} } }{\hat d_{\text{I2U} }},\nonumber
\end{align}
where $( x_{\text{U} }, y_{\text{U} }, z_{\text{U} })$  denotes the accurate location of the user, $\hat \vartheta_{\text{x-I2U} }  \triangleq \frac{x_{\text{I} }-\hat x_{\text{U}  }}{\hat d_{\text{I2U}  }}$,
$\Delta x_{\text{U} }=x_{\text{U}  }-\hat x_{\text{U}   }$, $\Delta y_{\text{U}   }=y_{\text{U}   }-\hat y_{\text{U}   }$ and $\Delta z_{\text{U}   }=z_{\text{U}   }-\hat z_{\text{U}   }$  are location errors along $x$, $y$ and $z$ axes, respectively.
\end{proposition}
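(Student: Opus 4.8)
The plan is to recognize that each effective AOD is a smooth function of the true user location, and that the decomposition (\ref{E5}) is precisely its first-order Taylor expansion about the estimated location $(\hat x_{\text{U}},\hat y_{\text{U}},\hat z_{\text{U}})$, with the higher-order remainder dropped. First I would write $\vartheta_{\text{z-I2U}}$ explicitly as a function of $(x_{\text{U}},y_{\text{U}},z_{\text{U}})$ by inserting $d_{\text{I2U}}=\sqrt{(x_{\text{I}}-x_{\text{U}})^2+(y_{\text{I}}-y_{\text{U}})^2+(z_{\text{I}}-z_{\text{U}})^2}$ into $\vartheta_{\text{z-I2U}}=(z_{\text{I}}-z_{\text{U}})/d_{\text{I2U}}$, and do the same for $\vartheta_{\text{y-I2U}}$.

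Next I would evaluate the gradient of $\vartheta_{\text{z-I2U}}$. The key intermediate identity is $\partial d_{\text{I2U}}/\partial x_{\text{U}}=-(x_{\text{I}}-x_{\text{U}})/d_{\text{I2U}}$, with the analogous expressions in $y_{\text{U}}$ and $z_{\text{U}}$. Feeding these through the chain rule, the two cross partials are $\partial \vartheta_{\text{z-I2U}}/\partial x_{\text{U}}=(z_{\text{I}}-z_{\text{U}})(x_{\text{I}}-x_{\text{U}})/d_{\text{I2U}}^3$ and $\partial \vartheta_{\text{z-I2U}}/\partial y_{\text{U}}=(z_{\text{I}}-z_{\text{U}})(y_{\text{I}}-y_{\text{U}})/d_{\text{I2U}}^3$, while the quotient rule on the diagonal term gives $\partial \vartheta_{\text{z-I2U}}/\partial z_{\text{U}}=\left((z_{\text{I}}-z_{\text{U}})^2-d_{\text{I2U}}^2\right)/d_{\text{I2U}}^3$.

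Evaluating these three derivatives at the estimated location and extracting one factor of $\hat d_{\text{I2U}}$, every surviving ratio $(x_{\text{I}}-\hat x_{\text{U}})/\hat d_{\text{I2U}}$, $(y_{\text{I}}-\hat y_{\text{U}})/\hat d_{\text{I2U}}$, $(z_{\text{I}}-\hat z_{\text{U}})/\hat d_{\text{I2U}}$ collapses into the corresponding estimated AOD $\hat\vartheta_{\text{x-I2U}}$, $\hat\vartheta_{\text{y-I2U}}$, $\hat\vartheta_{\text{z-I2U}}$. This is where the compact form of the stated $\epsilon_{\text{z-I2U}}$ emerges: the diagonal coefficient becomes $(\hat\vartheta_{\text{z-I2U}}^2-1)/\hat d_{\text{I2U}}$ and the off-diagonal ones become $\hat\vartheta_{\text{z-I2U}}\hat\vartheta_{\text{y-I2U}}/\hat d_{\text{I2U}}$ and $\hat\vartheta_{\text{z-I2U}}\hat\vartheta_{\text{x-I2U}}/\hat d_{\text{I2U}}$; multiplying by $\Delta x_{\text{U}}$, $\Delta y_{\text{U}}$, $\Delta z_{\text{U}}$ and summing reproduces $\epsilon_{\text{z-I2U}}$ exactly. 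Since $\vartheta_{\text{y-I2U}}$ is obtained from $\vartheta_{\text{z-I2U}}$ by interchanging the $y$ and $z$ coordinates, the expression for $\epsilon_{\text{y-I2U}}$ follows without a separate computation.

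There is no deep obstacle here; the argument is essentially careful bookkeeping. The one place demanding attention is the sign and the extra power of $d_{\text{I2U}}$ contributed by $\partial d_{\text{I2U}}/\partial\,\cdot$, as a slip there would corrupt both the $-1$ in the diagonal term and the final normalization by $\hat d_{\text{I2U}}$. The word ``approximately'' in the statement reflects the dropping of the Taylor remainder, which is legitimate under the bounded spherical error model: because $\|{\bf \Delta}\|$ is small compared with $\hat d_{\text{I2U}}$, the neglected second-order terms are of order $\|{\bf \Delta}\|^2/\hat d_{\text{I2U}}^2$ and hence negligible.
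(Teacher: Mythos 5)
Your proof is correct and follows the same route the paper intends: the paper's own ``proof'' is a one-line remark that the result follows from (\ref{AOD2}), and the natural derivation behind it is exactly the first-order Taylor expansion of $\vartheta_{\text{z-I2U}}=(z_{\text{I}}-z_{\text{U}})/d_{\text{I2U}}$ and $\vartheta_{\text{y-I2U}}=(y_{\text{I}}-y_{\text{U}})/d_{\text{I2U}}$ about the estimated location that you carry out. Your gradient computations and the identification of the coefficients with $\hat\vartheta_{\text{x-I2U}},\hat\vartheta_{\text{y-I2U}},\hat\vartheta_{\text{z-I2U}}$ reproduce the stated $\epsilon_{\text{z-I2U}}$ and $\epsilon_{\text{y-I2U}}$ exactly.
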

\begin{proof}
Starting from   (\ref{AOD2}), we can obtain the desired result.
\end{proof}

Invoking the results given by Proposition \ref{p1}, the reflection channel can be expressed as
\begin{align} \label{ge}
{\bf g}_{\text{I2U} }= \hat{\bf g}_{\text{I2U} } \odot {\bf e}_{\text{I2U} },
\end{align}
where $\odot$ stands for Hadamard product,  $\hat{\bf g}_{\text{I2U} }=  \alpha_\text{I2U} {\bf b}  (\hat\vartheta_{\text{z-I2U} },\hat\vartheta_{\text{y-I2U} }  )$ is the estimated channel, and ${\bf e}_{\text{I2U} }$ is the estimation error with
$
\left[ {\bf e}_{\text{I2U} } \right]_i=
    e^{j\pi {\bf f}_{  i}^T {\bm \Delta}} \label{error1}
$,
where   ${\bf f}_{  i}=[ a_{  i_mi_n}, b_{  i_mi_n}, c_{  i_mi_n}]^T$ with
\begin{align}
& a_{  i_mi_n}\triangleq (i_m-1) \frac{\hat\vartheta_{\text{z-I2U} }\hat\vartheta_{\text{x-I2U} }}{\hat d_{\text{I2U} }}
 + (i_n-1) \frac{\hat\vartheta_{\text{y-I2U} }\hat\vartheta_{\text{x-I2U} }}{\hat d_{\text{I2U} }},\\
& b_{  i_mi_n}\triangleq (i_m-1) \frac{\hat\vartheta_{\text{z-I2U} }\hat\vartheta_{\text{y-I2U} }}{\hat d_{\text{I2U} }}
 + (i_n-1) \frac{\hat\vartheta^2_{\text{y-I2U} }-1 }{\hat d_{\text{I2U} }},
 \end{align}
 \begin{align}
 & c_{  i_mi_n}\triangleq (i_m-1) \frac{ \hat\vartheta^2_{\text{z-I2U} }-1 }{\hat d_{\text{I2U} }}
 + (i_n-1) \frac{\hat\vartheta_{\text{y-I2U} }
 \hat\vartheta_{\text{z-I2U} } }{\hat d_{\text{I2U} }},\\
& i_m=i-(\lceil \frac{i}{M_z} \rceil-1) M_z, \
 i_n=\lceil \frac{i}{M_z} \rceil.
\end{align}

\section{Robust Beamforming Design}
Assuming that the location error is bounded by a sphere with the radius $\Upsilon$, i.e.,$\|{\bf \Delta}\|^2 \le \Upsilon^2$,  we aim to minimize the total transmit power through joint design of beamforming vector ${\bf w}$ and phase shift vector $\bm{\xi}$ under the worst-case QoS constraint, namely, the achievable rate should be above a threshold $r$. Mathematically, the worst-case robust design problem can be formulated as
\begin{align} \label{op1}
  &  \min\limits_{\left\{ {\bf w},\ {\bm \xi} \right\}}
 \left\|{\bf{w}} \right\|^2 \\
&\operatorname{s.t.}
            R \ge r,    \forall \|{\bf \Delta}\|^2 \le \Upsilon^2,
           \  |\xi_i|=1, i=1,\ldots,M. \nonumber
\end{align}

Substituting (\ref{R}) and (\ref{ge}) into (\ref{op1}), we have
\begin{align} \label{op2}
    &\min\limits_{\left\{ {\bf w},\ {\bm \xi} \right\}}
  \left\|{\bf{w}} \right\|^2 \\
 &\operatorname{s.t.}  \begin{array}[t]{lll}
          { | (\hat{\bf g}_{\text{I2U} } \odot {\bf e}_{\text{I2U} })^T { \bm \Theta } {\bf G}_{\text{B2I} } {\bf w}  |}^2 \! \ge \!  (2^r-1 )\sigma_0^2  ,    \forall \|{\bf \Delta}\|^2 \! \le \!  \Upsilon^2,  \\
            |\xi_i|=1, i=1,\ldots,M.
           \end{array}\nonumber
\end{align}

\subsection{Problem Transformation}
Denote ${\bf d} \triangleq
\text{diag}(\hat{\bf g}_{\text{I2U} }) { \bm \Theta } {\bf G}_{\text{B2I} }   {\bf w}$ and $\bar{\bf \Delta} \triangleq \frac{{\bf \Delta}}{\hat d_{\text{I2U} }} $.  The robust constraint in (\ref{op2})   is reformulated as
\begin{align} \label{c1}
   {\left|{\bf e}_{\text{I2U} }^T {\bf d}  \right|}^2
   \ge \left(2^{r}-1\right) \sigma_0^2
   , \forall \|\bar{\bf \Delta}\|^2 \le \frac{\Upsilon^2}{\hat{d}_{\text{I2U} }^2}.
\end{align}

Constraint (\ref{c1}) is a non-convex constraint involving infinitely many inequality constraints due to
the continuity of the location uncertainty set.
To handle the infinite inequalities, we give an approximation of  the left hand side of (\ref{c1}),  which is shown in the following proposition.
\begin{proposition} \label{p2}
The left hand side   of (\ref{c1}) can be approximated as
$
{\left|{\bf e}_{\text{I2U} }^T {\bf d}  \right|}^2
\approx
Q  +2{\bm{\phi}}^{T}  \bar{\bm \Delta} +  \bar{\bm \Delta}^T {\bm{\Phi}} \bar{\bm \Delta},
$
 where $Q   ={\left| {\bf 1}^T {\bf d}  \right|}^2$, ${\bm \phi}  =
     \sum\limits_{m=1}^{M}
     \sum\limits_{n=1}^{M} [{\bf d} ]_m [{\bf d} ]_n^*
      \left(\bar{\bf f}_{  m}- \bar{\bf f}_{  n} \right)$,
$
 \left[{\bm \Phi} \right]_{sl}=
  \sum\limits_{m=1}^{M}
     \sum\limits_{n=1}^{M} [{\bf d} ]_m [{\bf d} ]_n^*
        \left[\bar{\bf f}_{  m}- \bar{\bf f}_{  n} \right]_{s} \left[\bar{\bf f}_{  m}- \bar{\bf f}_{  n} \right]_{l}
$  with $ \left[{\bm \Phi} \right]_{sl}$ representing the entry in the $s$-th row and the $l$-th column of ${\bm \Phi}$ and
$\bar{\bf f}_{  m} \triangleq j\pi \hat d_{\text{I2U} } {\bf f}_{  m}$.
 \end{proposition}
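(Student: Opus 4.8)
The plan is to obtain the approximation via a second-order Taylor expansion of the error vector ${\bf e}_{\text{I2U}}$ in the small perturbation $\bar{\bm\Delta}$, and then to expand the resulting quadratic form. Recall from (\ref{ge}) and the definitions preceding Proposition \ref{p1} that the $i$-th entry of the error vector is $[{\bf e}_{\text{I2U}}]_i = e^{j\pi {\bf f}_i^T {\bm\Delta}} = e^{\bar{\bf f}_i^T \bar{\bm\Delta}}$, using $\bar{\bf f}_i = j\pi \hat d_{\text{I2U}} {\bf f}_i$ and $\bar{\bm\Delta} = {\bm\Delta}/\hat d_{\text{I2U}}$ so that the exponent is unchanged. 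First I would Taylor-expand each scalar exponential to second order, $[{\bf e}_{\text{I2U}}]_i \approx 1 + \bar{\bf f}_i^T \bar{\bm\Delta} + \tfrac{1}{2}(\bar{\bf f}_i^T \bar{\bm\Delta})^2$, which is justified because the location error is bounded by the small radius $\Upsilon$.

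Next I would substitute this expansion into the scalar ${\bf e}_{\text{I2U}}^T {\bf d} = \sum_{m=1}^{M} [{\bf d}]_m [{\bf e}_{\text{I2U}}]_m$, collecting terms by order in $\bar{\bm\Delta}$: the zeroth-order term is ${\bf 1}^T {\bf d}$, the first-order term is $\big(\sum_m [{\bf d}]_m \bar{\bf f}_m^T\big)\bar{\bm\Delta}$, and the second-order term is $\tfrac{1}{2}\sum_m [{\bf d}]_m (\bar{\bf f}_m^T\bar{\bm\Delta})^2$. I would then form the squared modulus $|{\bf e}_{\text{I2U}}^T {\bf d}|^2 = ({\bf e}_{\text{I2U}}^T {\bf d})({\bf e}_{\text{I2U}}^T {\bf d})^*$ and multiply out the two factors, retaining only terms up to second order in $\bar{\bm\Delta}$ (discarding cubic and quartic contributions consistent with the order of the expansion).

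Grouping the retained terms by their order then yields the claimed structure. The constant term is $|{\bf 1}^T {\bf d}|^2 = Q$. The linear-in-$\bar{\bm\Delta}$ term arises from pairing the zeroth-order factor of one copy with the first-order factor of the other; writing it out as a double sum over $m$ and $n$ and taking the real part via the conjugate pairing produces the coefficient ${\bm\phi} = \sum_{m}\sum_{n} [{\bf d}]_m [{\bf d}]_n^* (\bar{\bf f}_m - \bar{\bf f}_n)$, which I would verify reduces to the stated expression once the $2{\bm\phi}^T\bar{\bm\Delta}$ form is matched. The quadratic term collects two contributions: the product of the two first-order factors, giving $[{\bf d}]_m[{\bf d}]_n^* (\bar{\bf f}_m^T\bar{\bm\Delta})(\bar{\bf f}_n^T\bar{\bm\Delta})$, and the cross terms pairing a zeroth-order factor with a second-order factor; after combining these and symmetrizing, the coefficient matrix is ${\bm\Phi}$ with entries $[{\bm\Phi}]_{sl} = \sum_{m}\sum_{n}[{\bf d}]_m[{\bf d}]_n^* [\bar{\bf f}_m - \bar{\bf f}_n]_s [\bar{\bf f}_m - \bar{\bf f}_n]_l$.

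The main obstacle I anticipate is the bookkeeping in the quadratic term: one must correctly account for \emph{both} the outer product of first-order factors and the zeroth-times-second-order cross terms, and show that their sum collapses into the clean difference form $(\bar{\bf f}_m - \bar{\bf f}_n)(\bar{\bf f}_m - \bar{\bf f}_n)^T$ rather than a more complicated combination. The key algebraic identity that makes this work is that, because $\bar{\bf f}_m$ is purely imaginary (it carries the factor $j\pi$), the Hermitian symmetry of the squared modulus ensures the diagonal $\bar{\bf f}_m \bar{\bf f}_m^T$ and $\bar{\bf f}_n \bar{\bf f}_n^T$ pieces combine with the cross piece $-2\bar{\bf f}_m\bar{\bf f}_n^T$ to complete the square. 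I would verify this pairing carefully, as it is where an error in tracking conjugates would silently corrupt the result; the linear term and constant term are comparatively routine substitutions once the expansion is in place.
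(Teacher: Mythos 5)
Your approach is the same as the paper's (the paper's entire proof is the single sentence ``apply a second-order Taylor expansion''), and your plan is a correct, much more explicit version of it: the completing-the-square mechanism you identify is exactly right, since $\bar{\bf f}_n^T\bar{\bm\Delta}=j\pi{\bf f}_n^T{\bm\Delta}$ is purely imaginary. Two remarks. First, you can sidestep the bookkeeping you worry about entirely: because the exponents are purely imaginary, the $(m,n)$ term of the double sum is \emph{exactly} $[{\bf d}]_m[{\bf d}]_n^*\,e^{(\bar{\bf f}_m-\bar{\bf f}_n)^T\bar{\bm\Delta}}$, so a single scalar expansion $e^{t}\approx 1+t+\tfrac12 t^2$ with $t=(\bar{\bf f}_m-\bar{\bf f}_n)^T\bar{\bm\Delta}$ produces the difference form immediately, with no need to recombine zeroth-times-second and first-times-first contributions. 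Second, if you carry out the verification you promise, you will find that with the proposition's definitions of ${\bm\phi}$ and ${\bm\Phi}$ the expansion actually yields
\begin{align}
{\left|{\bf e}_{\text{I2U}}^T{\bf d}\right|}^2\approx Q+{\bm\phi}^T\bar{\bm\Delta}+\tfrac{1}{2}\,\bar{\bm\Delta}^T{\bm\Phi}\bar{\bm\Delta},
\end{align}
not $Q+2{\bm\phi}^T\bar{\bm\Delta}+\bar{\bm\Delta}^T{\bm\Phi}\bar{\bm\Delta}$: the gradient of $\sum_{m,n}[{\bf d}]_m[{\bf d}]_n^*e^{(\bar{\bf f}_m-\bar{\bf f}_n)^T\bar{\bm\Delta}}$ at zero is ${\bm\phi}$ itself (the ``$2$'' is not generated by the sum, which already ranges over ordered pairs), and the Hessian contributes the usual $\tfrac12$. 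So your assertion that the quadratic coefficient matrix ``is ${\bm\Phi}$ with entries as stated'' is off by a factor of two, as is the matching of the linear term that you defer; the statement is only recovered if ${\bm\phi}$ and ${\bm\Phi}$ are each redefined with an extra factor $\tfrac12$ (so that the expression is in the standard S-Procedure form used in Lemma~\ref{L1}). This appears to be an inconsistency inherited from the proposition itself, but a complete proof must either fix the constants or make the rescaling explicit.
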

 \begin{proof}
      By applying second-order Taylor expansion, we can obtain the desired result.
 \end{proof}

  Based on Proposition \ref{p2},  (\ref{c1}) can be rewritten as
  \begin{align} \label{c11}
  Q  +2{\bm{\phi}}^{T}  \bar{\bm \Delta} +  \bar{\bm \Delta}^T {\bm{\Phi}} \bar{\bm \Delta}
   \ge \left(2^{r}-1\right) \sigma_0^2
   , \forall \|\bar{\bf \Delta}\|^2 \le \frac{\Upsilon^2}{\hat{d}_{\text{I2U} }^2}.
  \end{align}
 Then, we leverage the following lemma to convert constraint (\ref{c11}) into an equivalent form
involving one matrix inequality.

\begin{lemma} \label{L1}
(General S-Procedure) Consider the quadratic matrix inequality (QMI)\cite{luo2004multivariate}:
\begin{align}
   h({\bf X})={\bf X}^{H} {\bf A} {\bf X}+ 2\text{Re} \left({\bf B}^{H} {\bf X} \right)+{\bf C} \succeq {\bf 0}, \\
    \forall {\bf X} \in \left\{ {\bf Y}| \text{tr}\left({\bf D}{\bf Y}{\bf Y}^H \right) \le 1, {\bf D} \succeq {\bf 0} \right\}, \nonumber
\end{align}
where $\text{tr}\left(\cdot\right)$ represents  the  trace, ${\bf A},{\bf D}  \in {\mathbb{H}}^{m}$ with ${\mathbb{H}}^{m}$ being the set of $m \times m$ Hermitain matrices, ${\bf X},{\bf B} \in {\mathbb{C}}^{m \times n}$ and ${\bf C}\in {\mathbb{H}}^{n}$. This QMI holds if there exists $\mu\ge0$ such that
\begin{align}
    \begin{bmatrix}
{\bf C} & {\bf B}^H  \\
{\bf B} & {\bf A}
\end{bmatrix}-\mu \begin{bmatrix}
{\bf I}_{n} & {\bf 0} \\
{\bf 0} & -{\bf D}
\end{bmatrix} \succeq {\bf 0},
\end{align}
provided that there exists a point $\bar{\bf X}$ such that $h(\bar{\bf X}) \succeq {\bf 0}$.
\end{lemma}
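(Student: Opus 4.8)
The plan is to prove only the sufficiency direction (that the displayed LMI forces $h({\bf X}) \succeq {\bf 0}$ on the entire uncertainty set), since that is the only implication invoked when transforming constraint (\ref{c11}). The engine of the argument is a congruence (lifting) identity. First I would write the stated inequality compactly as
\begin{align}
{\bf M} \triangleq \begin{bmatrix} {\bf C}-\mu{\bf I}_{n} & {\bf B}^{H} \\ {\bf B} & {\bf A}+\mu{\bf D}\end{bmatrix}
= \begin{bmatrix} {\bf C} & {\bf B}^{H} \\ {\bf B} & {\bf A}\end{bmatrix}-\mu\begin{bmatrix} {\bf I}_{n} & {\bf 0} \\ {\bf 0} & -{\bf D}\end{bmatrix} \succeq {\bf 0},\nonumber
\end{align}
so that the hypothesis is precisely ${\bf M}\succeq{\bf 0}$. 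The idea is that ${\bf M}\succeq{\bf 0}$ certifies nonnegativity of $h({\bf X})$ once ${\bf M}$ is tested against a vector built from ${\bf X}$.

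Next I would fix an arbitrary ${\bf X}$ in the uncertainty set and an arbitrary ${\bf z}\in\mathbb{C}^{n}$, and form the lifted test vector ${\bf t}=[{\bf z}^{H},({\bf X}{\bf z})^{H}]^{H}$. Expanding the congruence ${\bf t}^{H}{\bf M}{\bf t}$ block by block and using $2\,\text{Re}({\bf B}^{H}{\bf X})={\bf B}^{H}{\bf X}+{\bf X}^{H}{\bf B}$, the cross terms reassemble exactly into $h({\bf X})$, yielding
\begin{align}
{\bf t}^{H}{\bf M}{\bf t}={\bf z}^{H} h({\bf X}){\bf z}-\mu\big(\|{\bf z}\|^{2}-{\bf z}^{H}{\bf X}^{H}{\bf D}{\bf X}{\bf z}\big).\nonumber
\end{align}
Since ${\bf M}\succeq{\bf 0}$ the left-hand side is nonnegative, so ${\bf z}^{H} h({\bf X}){\bf z}\ge\mu\big(\|{\bf z}\|^{2}-\|{\bf D}^{1/2}{\bf X}{\bf z}\|^{2}\big)$. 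The last step, and the point at which the constraint $\text{tr}({\bf D}{\bf X}{\bf X}^{H})\le1$ is consumed, is to show the right-hand side is nonnegative: because $\mu\ge0$ it suffices to prove $\|{\bf D}^{1/2}{\bf X}{\bf z}\|\le\|{\bf z}\|$, which follows from $\|{\bf D}^{1/2}{\bf X}{\bf z}\|\le\|{\bf D}^{1/2}{\bf X}\|_{2}\|{\bf z}\|$ together with the norm comparison $\|{\bf D}^{1/2}{\bf X}\|_{2}\le\|{\bf D}^{1/2}{\bf X}\|_{F}=\sqrt{\text{tr}({\bf D}{\bf X}{\bf X}^{H})}\le1$. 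Hence ${\bf z}^{H} h({\bf X}){\bf z}\ge0$ for every ${\bf z}$, i.e. $h({\bf X})\succeq{\bf 0}$, and as ${\bf X}$ was arbitrary the QMI holds throughout the set.

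The one genuinely matrix-valued subtlety, and the step I expect to be the crux, is exactly this passage from the Frobenius (trace) constraint to the spectral bound needed to dominate the $-\mu\|{\bf D}^{1/2}{\bf X}{\bf z}\|^{2}$ term; everything else is bookkeeping. For the application here this obstacle disappears, since $n=1$: the form $h({\bf X})$ is scalar, ${\bf z}$ reduces to a scalar, and ${\bf X}^{H}{\bf D}{\bf X}=\text{tr}({\bf D}{\bf X}{\bf X}^{H})\le1$ directly, so the bound is immediate and no operator-versus-Frobenius comparison is needed. Finally, I would note that the feasibility point $\bar{\bf X}$ with $h(\bar{\bf X})\succeq{\bf 0}$ plays no role in this ``if'' direction; it is the Slater-type condition underpinning the converse (the existence of an admissible $\mu$, i.e. losslessness of the relaxation), which I would invoke from \cite{luo2004multivariate} rather than reprove, as it is not required to convert constraint (\ref{c11}) into a matrix inequality.
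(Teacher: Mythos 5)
Your argument is correct. Note first that the paper does not actually prove Lemma~\ref{L1}: it imports the statement wholesale from \cite{luo2004multivariate}, so there is no in-paper proof to compare against. Your lifting argument is the standard and correct verification of the sufficiency direction: the congruence identity ${\bf t}^{H}{\bf M}{\bf t}={\bf z}^{H}h({\bf X}){\bf z}-\mu\left(\|{\bf z}\|^{2}-{\bf z}^{H}{\bf X}^{H}{\bf D}{\bf X}{\bf z}\right)$ with ${\bf t}=[{\bf z}^{H},({\bf X}{\bf z})^{H}]^{H}$ checks out block by block (with $2\,\text{Re}({\bf B}^{H}{\bf X})$ read as the Hermitian part ${\bf B}^{H}{\bf X}+{\bf X}^{H}{\bf B}$, which is the only reading under which $h({\bf X})$ is Hermitian), and the passage from the trace constraint to ${\bf z}^{H}{\bf X}^{H}{\bf D}{\bf X}{\bf z}\le\|{\bf z}\|^{2}$ is valid since ${\bf X}^{H}{\bf D}{\bf X}\succeq{\bf 0}$ gives $\lambda_{\max}({\bf X}^{H}{\bf D}{\bf X})\le\text{tr}({\bf X}^{H}{\bf D}{\bf X})=\text{tr}({\bf D}{\bf X}{\bf X}^{H})\le1$ (equivalent to your operator-norm-versus-Frobenius comparison). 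You are also right on the two structural points: the Slater-type point $\bar{\bf X}$ is only relevant to the converse (losslessness) direction, which the paper never uses, and in the actual application to constraint (\ref{c11}) one has $n=1$ and real $\bar{\bm\Delta}\in\mathbb{R}^{3}$ with ${\bf D}=(\hat d_{\text{I2U}}^{2}/\Upsilon^{2}){\bf I}_3$, so the scalar case suffices and the norm comparison is trivial. In short, your proposal supplies a self-contained elementary proof of exactly the implication the paper needs but only cites.
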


By applying Lemma \ref{L1}, the constraint (\ref{c11}) is transformed into
\begin{align} \label{c111}
   \begin{bmatrix}
 Q -\left(2^{r }-1\right) \sigma_0^2-\mu  & {\bm{\phi}} ^{T}  \\
{\bm{\phi}}  & {\bm{\Phi}} +\mu  \frac{\hat{d}_{\text{I2U} }^{2} }{\Upsilon^2} {\bf I}_3
\end{bmatrix} \succeq {\bf 0},
\end{align}
where ${  \mu}  \ge 0$ is a slack variable.

The transformed constraint (\ref{c111}) involves only one matrix inequality constraint, which is more amenable for algorithm design compared to the infinitely many constraints in the original
constraint (\ref{c1}). However, the resulting optimization problem is still not jointly convex with respect to
${\bf{w}}$ and ${\bm \xi}$.
 Therefore, we adopt an alternating optimization  method to optimize ${\bf{w}}$ and ${\bm \xi}$ iteratively.

 \subsection{Optimization of ${\bf w}$}
 Specifically, for given phase shift beam ${\bm \xi}$, the sub-problem of problem (\ref{op2}) corresponding to the beamforming vector ${\bf w}$ is formulated as
 \begin{align} \label{sub_op1}
  \begin{array}{ll}
    \min\limits_{\left\{ {\bf w}, { \mu }\right\}}
    \left\|{\bf{w}} \right\|^2, \
 \text{s.t.}  \begin{array}[t]{lll}
               (\ref{c111}), \mu\ge 0.
           \end{array}
  \end{array}
\end{align}

However, the constraint (\ref{c111}) is non-convex.
To proceed, we recast the optimization problem as a rank-constrained SDP problem. By applying the change of variables $\bar{\bf W}={\bf w} {\bf w}^{H}$, problem (\ref{sub_op1}) can be rewritten as
 \begin{align}
    & \min\limits_{\left\{ \bar{\bf W} \in {\mathbb{S}}_{+}^{N},\ { \mu} \right\}}
    \text{tr}\left(\bar{\bf W} \right)    \\
 & \operatorname{s.t.}  \begin{array}[t]{llll}
                \begin{bmatrix}
 \bar{Q} -\left(2^{r}-1\right) \sigma_0^2-\mu & \bar{\bm{\phi}}^{T}  \\
\bar{\bm{\phi}}  & \bar{\bm{\Phi}} +\mu \frac{\hat{d}_{\text{I2U}}^{2} }{\Upsilon^2} {\bf I}_3
\end{bmatrix} \succeq {\bf 0}, \\
           \text{rank}\left( \bar{\bf W}\right)=1, \mu \ge 0,
  \end{array}\nonumber
\end{align}
where ${\mathbb{S}}_{+}^{N}$ denotes the set of  $N \times N$ positive semidefinite  Hermitain matrices,
 $\bar{Q}  ={\bf 1}_{M}^{T} {\bf T}     \bar{\bf W}
 {\bf T} ^H {\bf 1}_{M}$,
$\left[ \bar{\bm \phi} \right]_{q}=\text{tr}\left({\bf D}_{q} {\bf T}  \bar{\bf W} {\bf T}^{H} \right)$,
$\left[\bar{\bm \Phi}  \right]_{sl}=
 \text{tr}\left({\bf A}_{sl} {\bf T} \bar{\bf W} {\bf T} ^{H} \right)$,
where ${\bf 1}_M$ denotes a $M \times1 $ vector whose elements all equal to 1, ${\bf T}    =\text{diag}(\hat{\bf g}_{\text{I2U}}) { \bm \Theta } {\bf G}_{\text{B2I} }$, ${\bf D}_{   q} \in {\mathbb{R}}^{M \times M}$ and ${\bf A}_{   sl}  \in {\mathbb{R}}^{M \times M}$ with  entries in the $m$-th row and the $n$-th column  given by
$[{\bf D}_{   q}]_{mn}= \left[\bar{\bf f}_{   m}- \bar{\bf f}_{   n} \right]_{q}$ and $ [{\bf A}_{   sl}]_{mn}=\left[\bar{\bf f}_{   m}- \bar{\bf f}_{   n} \right]_{s} \left[\bar{\bf f}_{   m}- \bar{\bf f}_{   n} \right]_{l}$.

 As such, the only remaining non-convexity
of problem (14) is due to the rank constraint. Generally, solving such a rank-constrained
problem is known to be NP-hard. To overcome this issue, we adopt the SDR technique and
drop the rank constraint.

 \begin{align} \label{sub_op2}
   &  \min\limits_{\left\{ \bar{\bf W} \in {\mathbb{S}}_{+}^{N},\ { \mu} \right\}}
    \text{tr}\left(\bar{\bf W} \right)    \\
 & \operatorname{s.t.} \begin{array}[t]{llll}
                \begin{bmatrix}
 \bar{Q} -\left(2^{r}-1\right) \sigma_0^2-\mu & \bar{\bm{\phi}}^{T}  \\
\bar{\bm{\phi}}  & \bar{\bm{\Phi}} +\mu \frac{\hat{d}_{\text{I2U}}^{2} }{\Upsilon^2} {\bf I}_3
\end{bmatrix} \succeq {\bf 0}, \
            \mu \ge 0,
  \end{array} \nonumber
\end{align}

Therefore, the resulting problem becomes a convex  SDP problem, and thus can be efficiently solved by standard convex program solvers such as CVX.
While there is no
guarantee that the solution obtained by SDR satisfies the rank constraint,
the Gaussian randomization can be  used to obtain a feasible solution to
problem (\ref{sub_op1}) based on the higher-rank solution obtained by solving (\ref{sub_op2}).

 \subsection{Optimization of ${\bm \xi}$}
 For given beamforming vector ${\bf w}$, the sub-problem of problem (\ref{op2}) corresponding to the phase shift beam ${\bm \xi}$  becomes  a feasibility-check problem. To further improve  optimization performance, we  introduce a slack variable ${ v}$, which is interpreted as the “SINR residual” of the  user. Hence, the feasibility-check problem of $\bm \xi$ is formulated as follows
 \begin{align}  \label{sub_xi1}
  &  \max\limits_{\left\{  {\bm \xi}, { v}, { \mu} \right\}}
 v,  \\
 & \operatorname{s.t.}
           \text{Modified-}(\ref{c111}), {v}\ge 0, { \mu}\ge 0,
            |\xi_i|=1, i=1,\ldots,M,\nonumber
\end{align}
where  the  Modified-(\ref{c111}) is obtained from (\ref{c111})  by replacing $ (2^{r}-1){\sigma}_{0}^{2} $ with $(2^{r}-1){\sigma}_{0}^{2}+v$.

To address the non-convexity of both the constraint (\ref{c111}) and the unit-modulus constraint, we apply change of variables $\bm{\Xi}={\bm \xi}{\bm \xi}^H$. Hence, the problem (\ref{sub_xi1}) can be
transformed into a rank-constrained  (SDP) problem as follows:
 \begin{align} \label{sub_xi2}
     & \max\limits_{\left\{  {\bm \Xi},  { \mu},  v \right\}}
  v,  \\
 &\operatorname{s.t.}  \begin{array}[t]{llll}
                \begin{bmatrix}
 \widetilde{Q}     -\left(2^{r}-1\right) \sigma_0^2-\mu-v  & \widetilde{\bm{\phi}}^{T}  \\
\widetilde{\bm{\phi}}  & \widetilde{\bm{\Phi}} +\mu \frac{\hat{d}_{\text{I2U}}^{2} }{\Upsilon^2} {\bf I}_3
\end{bmatrix} \succeq {\bf 0},  \\
           \left[\bm \Xi\right]_{ii}=1, i=1,\ldots,M, \\
           \text{rank}\left(  {\bm \Xi}\right)=1, \
           {\bm \Xi} \in {\mathbb{S}}_{+}^{M}, \ {v} \ge 0, { \mu} \ge 0,
  \end{array} \nonumber
\end{align}
where
$
 \widetilde{Q}     ={\bf 1}_{M}^{T} {\bm \Pi}     {\bm \Xi}
{\bm \Pi}    ^H {\bf 1}_{M}$,
$ [ \widetilde{\bm \phi}     ]_{q}=\text{tr} ({\bf D}_{   q} {\bm \Pi}     {\bm \Xi} {\bm \Pi}    ^{H} )
$,
$  [\widetilde{\bm \Phi}    ]_{sl}=
 \text{tr} ({\bf A}_{   sl} {\bm \Pi}     {\bm \Xi} {\bm \Pi}    ^{H}  )
$,
and  ${\bm \Pi}    =\text{diag}(  \text{diag}(\hat{\bf g}_{\text{I2U}}) { \bm \Theta } {\bf G}_{\text{B2I} } {\bf w})$.

% ${\bf D}_{   q} \in {\mathbb{R}}^{M \times M}$ and ${\bf A}_{   sl}  \in {\mathbb{R}}^{M \times M}$ with  entries in the $m$-th row and the $n$-th column  given by
% $[{\bf D}_{   q}]_{mn}= \left[\bar{\bf f}_{   m}- \bar{\bf f}_{   n} \right]_{q}$ and $ [{\bf A}_{   sl}]_{mn}=\left[\bar{\bf f}_{   m}- \bar{\bf f}_{   n} \right]_{s} \left[\bar{\bf f}_{   m}- \bar{\bf f}_{   n} \right]_{l}$, respectively.

To handle the  non-convexity
of  the rank constraint in (\ref{sub_xi2}), we adopt the SDR technique and
drop the rank constraint.
 \begin{align} \label{sub_xi3}
    & \max\limits_{\left\{  {\bm \Xi},  { \mu},  v \right\}}
  v,  \\
& \operatorname{s.t.}  \begin{array}[t]{llll}
                \begin{bmatrix}
 \widetilde{Q}     -\left(2^{r}-1\right) \sigma_0^2-\mu-v  & \widetilde{\bm{\phi}}^{T}  \\
\widetilde{\bm{\phi}}  & \widetilde{\bm{\Phi}} +\mu \frac{\hat{d}_{\text{I2U}}^{2} }{\Upsilon^2} {\bf I}_3
\end{bmatrix} \succeq {\bf 0},  \\
           \left[\bm \Xi\right]_{ii}=1, i=1,\ldots,M,
           \  {\bm \Xi} \in {\mathbb{S}}_{+}^{M}, {v} \ge 0, { \mu} \ge 0.
           \end{array} \nonumber
\end{align}

As such, the resulting problem becomes a convex SDP problem, which can be efficiently solved by CVX.
Then, we apply the Gaussian randomization  to obtain a feasible solution to
problem (\ref{sub_xi1}) based on the higher-rank solution obtained by solving (\ref{sub_xi3}).

\subsection{Alternating Optimization of ${\bf w}$ and $\bm \xi$}

Problem (\ref{op1}) is tackled by solving two sub-problems (\ref{sub_op2}) and (\ref{sub_xi3}) in an iterative manner, the details of which are summarized in Algorithm \ref{Alg}.
 \begin{algorithm}[!t]
\caption{Alternating Optimization Algorithm}
\label{Alg}
\begin{algorithmic}[1]
\State $\mathbf{Initialization:}$ Given feasible initial solutions ${\bm \xi}_{0}$, ${\bf w}_{0}$ and the iteration index $i=0$.

\Repeat

\State For given phase shifts ${\bm \xi}_{i}$,  solve problem (\ref{sub_op2}) and  then invoke Gaussian randomization to obtain a feasible beamforming vector ${\bf w}_{i+1}$.
\State For given beamforming vector ${\bf w}_{i+1}$, solve problem (\ref{sub_xi3}) and then exploit Guassian randomization to get a feasible phase shift vector ${\bm \xi}_{i+1}$.

\State $i \leftarrow i+1$.

\Until The fractional decrease of the objective value  in (\ref{sub_op2}) is below a threshold $\varepsilon > 0$.

\State $\mathbf{Output:}$ ${\bm \xi}^{\star} ={\bm \xi}_i$ and ${\bf w}^{\star}={\bf w}_{i}$.

  \end{algorithmic}
\end{algorithm}
\begin{remark}
Note that all considered sub-problems are SDP problems, which can be solved by interior point method.
Therefore, the approximate complexity of problem (\ref{sub_op2}) is $ {o}_1=\mathcal{O}(\sqrt{5}   ((\frac{N(N+1)}{2}+2)^3+16(\frac{N(N+1)}{2}+2)^2 +64(\frac{N(N+1)}{2}+2) ) )$, and that of problem (\ref{sub_xi3}) is $ {o}_2=\mathcal{O}( \sqrt{5}    ((\frac{M(M-1)}{2}+2)^3+16(\frac{M(M-1)}{2}+2)^2+64(\frac{M(M-1)}{2}+2) ) )$. Finally, the approximate complexity of Algorithm 2 per iteration is $ {o}_1+ {o}_2$.
With such  complexity, it is required that the BS has strong computational capacity so that the transmit beam and IRS phase shifts can be computed in real time.
\end{remark}
\begin{remark}
Since a SDR technique followed by Gaussian randomization is adopted when solving the two sub-problems,  the strict convergence of the proposed alternating algorithm can not be  guaranteed. But for each sub-problem,  such a
SDR approach followed by a sufficiently large number of randomizations guarantees at least a
$\frac{\pi}{4}$-approximation of the optimal objective value.
\end{remark}

% \begin{remark}
% Note that all considered sub-problems are SDP problems, which can be solved by interior point method with the worst-case computational complexity  $\mathcal{O}( (1+\sum\limits_{i=1}^{J} b_i)^{1/2}   (n^3+n^2\sum\limits_{i=1}^{J} b_i^2+n\sum\limits_{i=1}^{J} b_i^3) )$\cite{ben2001lectures}, where $n$ is the number of variables and $J$ is the number of LMIs of size $b_i$. Therefore, the approximate complexity of problem (\ref{sub_op2}) is $ {o}_1=\mathcal{O}(\sqrt{5}   ((N+2)^3+16(N+2)^2 +64(N+2) ) )$, and that of problem (\ref{sub_xi3}) is $ {o}_2=\mathcal{O}( \sqrt{5}    ((M+2)^3+16(M+2)^2+64(M+2) ) )$. Finally, the approximate complexity of Algorithm 2 per iteration is $ {o}_1+ {o}_2$.
% \end{remark}

\section{Simulation results}
In this section, we provide numerical results to evaluate the performance of the proposed algorithm.
 The considered system is assumed to operate at $28$ GHz with bandwidth of 100 MHz   and noise spectral power density of -169 dBm/Hz.
 The IRS is at the origin $(0, 0, 0)$ of a Cartesian coordinate system. The locations of the BS and the user are $(100, -100, 0) $ and $(20, 20, -20) $, respectively.
Unless otherwise specified, the following setup is used: $N=16$, $M=100$  and $\Upsilon=4$.\footnote{As pointed out in \cite{LIS2}, synchronously operating a large number of phase shifters  is a non-trivial task. The IRS may suffer from phase errors. It have been shown in  \cite{LIS1}  that the average received SNR is attenuated by $\phi^2$ where $\phi \triangleq \mathbb{E}\left\{e^{j \theta} \right\}$ with $\theta$ being the phase error.  One possible way to reduce the phase error is to use phase shifters with adaptive tuning capability. }

Fig. \ref{f1} compares the proposed robust scheme with a “Non-robust” optimization scheme \cite{wu2019intelligent}.
It is worth noting that the target rate $r$ is a predefined threshold, while the x-axis in Fig.2 ($R$) is the rate achieved under different location errors of each channel realization.
The proposed robust beamforming scheme requires that the achievable  rate  under all possible location errors should exceed the threshold $r$, while the non-robust beamforming scheme regards the estimated
channels as perfect channels, and aims to maximize the achievable rate without location uncertainty, using the same power as the proposed beamforming scheme.
As can be readily seen, the variation of the user rate with robust optimization is much smaller than that with non-robust optimization. Moreover, the target rate can  be always guaranteed regardless of user location errors. By contrast, with non-robust optimization,  the user rate varies over a wide range. For instance, with $r=6$, the user rate ranges  from near 0 bits/s/Hz to about 7.2 bits/s/Hz. Moreover, there is a high probability (over $80\%$) that the target rate  can  not  be  achieved.

\begin{figure}[!ht]
  \centering
  \includegraphics[width=3 in]{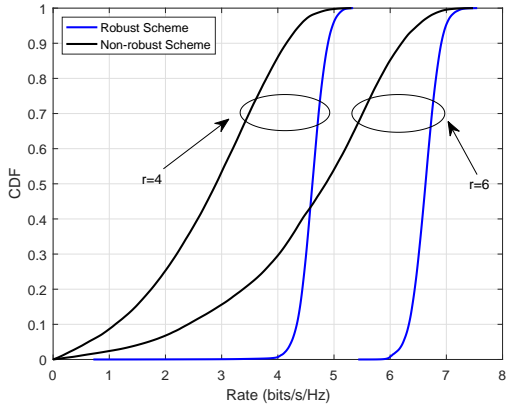}
   \caption{ Comparison of the proposed robust optimization and the non-robust optimization.}
  \label{f1}
\end{figure}

Fig. \ref{f2} presents the transmit power versus the target rate with different location uncertainty and numbers of reflecting elements. Obviously, to achieve a higher target rate, more power is required. Also, as location uncertainty (measured by $\Upsilon$) increases, the required transmit power becomes larger so that the target rate can be achieved for all possible location errors. Besides, since both the  beamforming gain and the  aperture gain of the IRS grow with the number of reflecting elements, the transmit power drops significantly with the number of reflecting elements, implying the benefit of the IRS in power saving.

\begin{figure}[!ht]
  \centering
  \includegraphics[width=3 in]{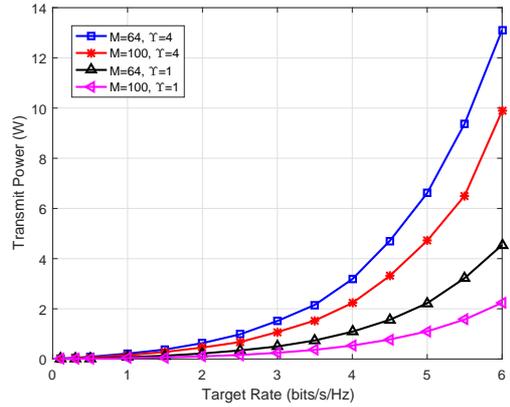}
   \caption{ Transmit power versus the target rate.}
  \label{f2}
\end{figure}

\section{Conclusion}
In this paper, considering user location uncertainty, we study the robust beamforming design for an IRS-aided communication system. We first handle the location uncertainty  by exploiting techniques of Taylor expansion and S-Procedure. Then SDR is used to transform the non-convex problem into a sequence of SDP sub-problems, which can be efficiently solved via some optimization tools, for example, CVX.  Simulation results demonstrate that with our proposed robust algorithm, the QoS requirement is always  met regardless of location uncertainty, while
with a non-robust optimization algorithm \cite{wu2019intelligent} which has the same transmit power as the proposed algorithm,  the QoS requirement is met   with a  probability below $20 \% $.

\nocite{*}
\bibliographystyle{IEEE}
 \begin{footnotesize}

\end{footnotesize}

\end{document}